\newtheorem{theorem}{Theorem}
\newtheorem{corollary}[theorem]{Corollary}
\newtheorem{definition}[theorem]{Definition}
\newcommand{\reals}{\ensuremath{\mathbb{R}}}
\newcommand{\tdm}{\ensuremath{\textsc{3dm(3)}}\xspace}
\newcommand{\windet}{\ensuremath{\textsc{WinnerDetermination}}\xspace}
\newcommand{\walpri}{\ensuremath{\textsc{WalrasianPricing}}\xspace}
\newcommand{\NP}{\ensuremath{\mathtt{NP}}\xspace}
\newcommand{\p}{\ensuremath{\mathtt{P}}\xspace}
\newcommand{\NC}{\ensuremath{\mathtt{NC}}\xspace}
\newcommand{\RNC}{\ensuremath{\mathtt{RNC}}\xspace}
\newcommand{\problemdef}[3]{
	%   \vspace{10pt}\hfill
	\begin{center}
		\fbox{ 
			\begin{minipage}{0.95\textwidth}
				\noindent
				\normalsize\textsc{#1}
				
				\vspace{1pt}
				\setlength{\tabcolsep}{3pt}
				\renewcommand{\arraystretch}{1.0}
				\begin{tabularx}{\textwidth}{@{}lX@{}}
					\normalsize\textbf{Input:} 	& \normalsize#2 \\
					\normalsize\textbf{Task:} 	& \normalsize#3 
				\end{tabularx}
			\end{minipage}
		}
	\end{center}
	%   \vspace{10pt}
}
\title{Walrasian Equilibria in Markets with Small Demands}
\author{Argyrios Deligkas\thanks{Royal Holloway University of London, UK. 
		Email: \texttt{argyrios.deligkas@rhul.ac.uk}} 
	\and Themistoklis Melissourgos\thanks{Technical University of Munich, Germany. 
		Email: \texttt{themistoklis.melissourgos@tum.de}}
	\and Paul G. Spirakis\thanks{University of Liverpool, UK, and University of Patras, Greece. 
		Email: \texttt{p.spirakis@liverpool.ac.uk}} 
}
\date{\vspace{-1.0cm}}
\begin{document}
	
\maketitle

~\\

\begin{abstract}
	We study the complexity of finding a Walrasian equilibrium in markets where the agents have $k$-demand valuations. These valuations are an extension of unit-demand valuations where a bundle's value is the maximum of its $k$-subsets' values. For unit-demand agents, where the existence of a Walrasian equilibrium is guaranteed, we show that the problem is in quasi-\NC. 
	For $k=2$,  we show that it is \NP-hard to decide if a Walrasian equilibrium exists even if the valuations are fractionally subadditive (\textsc{XOS}), while for $k=3$ the hardness carries over to budget-additive valuations. In addition, we give a polynomial-time algorithm for markets with 2-demand single-minded valuations, or unit-demand valuations. 
\end{abstract}

%%% The code below was generated by the tool at http://dl.acm.org/ccs.cfm.
%%% Please replace this example with code appropriate for your own paper.

%%% Use this command to specify a few keywords describing your work.
%%% Keywords should be separated by commas.

%\keywords{Walrasian equilibrium; Winner determination; NP-hardness}

%%%%%%%%%%%%%%%%%%%%%%%%%%%%%%%%%%%%%%%%%%%%%%%%%%%%%%%%%%%%%%%%%%%%%%%%

%%% Include any author-defined commands here.
         
\newcommand{\BibTeX}{\rm B\kern-.05em{\sc i\kern-.025em b}\kern-.08em\TeX}

%%%%%%%%%%%%%%%%%%%%%%%%%%%%%%%%%%%%%%%%%%%%%%%%%%%%%%%%%%%%%%%%%%%%%%%%

%%% The following commands remove the headers in your paper. For final 
%%% papers, these will be inserted during the pagination process.

%\pagestyle{fancy}
%\fancyhead{}

%%% The next command prints the information defined in the preamble.

\setcounter{page}{1}

~\\	
~\\

	\section{Introduction}
One of the most significant problems in market design is finding pricing schemes that guarantee good social welfare under equilibrium. Evidently, the most compelling equilibrium notion in markets with indivisible items is a Walrasian eqilibrium, henceforth WE, \cite{W1896}: an allocation of items to the agents and a pricing, such that every agent maximizes her utility and all items are allocated. By the First Welfare Theorem, WE has the nice property of maximizing social welfare.
The existence of WE seems to heavily rely on the class of valuation functions of the agents. When parameterized by the valuation function class, the existence of WE is (relatively) clear due to Gul and Stracchetti \cite{GS99} and Milgrom \cite{M00}: WE are guaranteed to exist in the class of gross substitutes valuation functions, and beyond this class, there are almost no other valuations where WE are guaranteed. Two of the most central and interesting problems regarding WE are:
\begin{itemize}
	\item[(a)] decide if a WE exists,
	\item[(b)] compute a WE (if it exists).
\end{itemize}

We study the aforementioned two problems when valuation functions are parameterized by 
an integer $k$ which denotes the maximum bundle size $k$ for which every agent is interested. Such a class of {\em $k$-demand} valuation functions can be seen as an extension of the unit-demand functions, where each agent, for a given bundle $X$ values only the most valuable $k$-subset of $X$. 
The main idea behind k-demand valuations is that every agent has some capacity for utilising the items that is either endogenously or exogenously imposed. There are several real-life examples where more than $k$ items have the same value as $k$ of them: a supervisor can effectively supervise up to a limited number of students; a grant investigator can efficiently work up to a limited number of projects; a sports team is allowed to have up to a small number of foreign players (or at least a small number of native players) in the squad; one can hang only a certain number of paintings on their house's walls.
We investigate the complexity of the aforementioned problems when we are restricted to the intersection of the standard valuation classes and the $k$-demand classes. Our results contain hardness results as well as efficient algorithms.

As an example of the effect that $k$-demand valuation functions have on the complexity of these problems, we present {\em unbalanced markets}. In such markets the available items are significantly more than the agents, or vice versa. We provide an algorithm for the aforementioned problems parameterized by $k$. Complemented by a result of Rothkopf et al. \cite{rothkopf1998computationally}, this algorithm concludes that for constant $k$ and appropriate unbalancedness, these problems are in $\p$.

\subsection{Contribution}

In this work we study WE under their classic definition with no relaxation or approximation notions involved. We introduce a hierarchy of valuation functions, parallel to the already existing one. Our valuation functions are called $k$-demand and are a generalization of unit-demand with parameter $k$ that determines at most how many items from a bundle the agent cares about. By definition, it is easy to see that the class of $j$-demand is included in $(j+1)$-demand for any $j \in [m-1]$. The purpose of considering valuation functions from the intersection of some $k$-demand class and some other known class, is to refine the complexity of the WE-related problem. 

Algorithms and hardness results on the existence of WE and/or the problem of computing one in the current literature show an interesting dependence on the parameter $k$ that we define here. For example, existence of WE is guaranteed in the well studied case of unit-demand valuation functions (i.e. $k=1$), and a WE can be computed in polynomial time \cite{DGS86,L17}. Non-existence of WE is established in \cite{RT15} by proving that even computing a welfare maximizing allocation (\windet) is \NP-hard and this is achieved for valuation functions according to which the agents are only interested in at most 2 items (i.e. $k=2$). Furthermore, non-existence of WE and \NP-hardness of \windet is proven for single-minded agents via a reduction to instances where agents are interested in at most 3 items (i.e. $k=3$) \cite{CDS04}. For each of the above cases of $k$ we give improved results: we supplement the ``easy'' case, where $k=1$, with a quasi-\NC algorithm\footnote{This is the first parallel algorithm for computing WE to the authors' knowledge.}, and the ``hard'' cases with stronger \NP-hardness results in the sense that ours imply the existing ones. 

Mixing the standard valuations' hierarchy and the $k$-demand hierarchy results to a two-dimensional landscape of valuation classes that aims to break down the complexity of the WE-related problems. For example, a possible result could be that below some threshold of $k$ and below some standard valuation class, deciding WE existence is in \p. Our results however indicate that this is not the case: even for $k = 2$ and \textsc{XOS} functions \windet is \NP-hard, and therefore deciding existence of WE is also \NP-hard. This is an improvement over the result of Roughgarden and Talgam-Cohen \cite{RT15}, where \NP-hardness is proven for $k=2$ but general functions. Our reduction is entirely different than the one in \cite{RT15}, and in particular, it is from the problem ``3-bounded 3-dimensional matching'' to a market with $n$ agents, $m$ items and $2$-demand \textsc{XOS} valuations. Furthermore, in \cite{LLN06} \windet is proven to be weakly \NP-hard for budget-additive functions by reducing ``knapsack'' to a market with 2 agents, and $m$ items. We show that the problem is strongly \NP-hard for $k$-demand budget-additive functions even for $k = 3$. The case $k=2$ for budget-additive and submodular functions remains open. 

On the positive side, we show a clear dichotomy for the problem of deciding WE existence with single-minded agents. It was proven in \cite{CDS04} that \windet is \NP-hard, via a reduction from ``exact cover by 3-sets'' to a market with single-minded agents who actually used $3$-demand valuations. We show that \windet is solvable in polynomial time for single-minded agents with $2$-demand valuations by a reduction to the maximum weight matching problem. Then, by the decomposition shown at the end of Section \ref{sec:preliminaries}, one can find a WE pricing via an LP (if such a pricing exists).

\subsection{Related Work}

\paragraph{\textbf{Existense of Walrasian Equilibria.}}

The most general class of valuation functions for which existence of WE is guaranteed has been proved by Gul and Stracchetti \cite{GS99} and Milgrom \cite{M00} to be gross substitutes. Other valuation classes (that can be seen as special market settings) outside gross substitutes that guarantee WE existence have also been discovered, including the ``tree valuations'' in \cite{COP15}, and the valuation classes of \cite{BLN13,C14,CP18}. 
Interestingly, the former admits also a polynomial time algorithm.

Non-existence of WE has been shown for many valuation classes, mostly by constructing an ad hoc market that does not identify some particular pattern as responsible for the non-existence (e.g. \cite{GS99,LLN06,CSS05}). 
Roughgarden and Talgam-Cohen in \cite{RT15} reprove some of these results and show a systematic way of proving non-existence of WE for more general valuation and pricing classes via standard complexity assumptions. The latter paper shows the remarkable relation between computability of seemingly arbitrary problems and existence of equilibria in markets.  In fact, one of their results states that if for some class $\mathcal{V}$ of valuation functions \windet is computationally harder than finding the demand for each agent, then there exist instances in $\mathcal{V}$ with no WE.  

\paragraph{\textbf{Computation of Walrasian Equilibria.}}

On the computational side, in markets that do not guarantee existence of WE, the problem of deciding existence is \NP-hard for all the most important valuation classes. This has been established by proving that \windet for budget-additive valuations is \NP-hard via the ``knapsack'' problem in \cite{LLN06} and via the strongly \NP-hard problem ``bin packing'' in \cite{RT15}. By the fact that a WE corresponds to an optimal allocation, it is immediate that existence of WE is at least as hard as \windet. Since budget-additive functions are a subset of submodular functions, it seems that as soon as valuation functions are allowed to be more general than the class of gross substitutes, i.e. submodular, the problem is already \NP-hard. Also, for the class of single-minded agents (which is incomparable to the rest of the classes), \windet is \NP-hard \cite{CDS04}. On the positive side, Rothkopf et al.~\cite{rothkopf1998computationally} provide several classes of valuation functions where \windet can be efficiently solved. In particular, they show that when there are logarithmically many items with respect to the number of agents, \windet can be efficiently solved via dynamic programming. Murota in \cite{M96a, M96b} shows a strongly polynomial time algorithm for the problem of computing a WE in gross substitutes valuations, while Nisan and Segal in \cite{NS06} propose a way to utilize the gross substitutes properties to build a suitable linear program. Sandholm~\cite{SANDHOLM2002} provides a comparison of several different methods for \windet and experimentally evaluates them.
It is also worth mentioning the ``tollbooth'' problem on trees, defined in \cite{GHKKKM05} (see also \cite{CR08}), for which, even though WE existence is not guaranteed, finding one (if it exists) is in \p. 

\paragraph{\textbf{Relaxations/Approximations.}}
Due to \cite{GS99} and \cite{M00}, existence of WE is guaranteed only in a restrictive class of functions, namely {\em gross substitutes}. This fact has ignited a line of works that, in essence, question the initially defined WE as being the equilibrium that occurs in actual markets. These works consider relaxed or approximate versions of WE. Some of the most interesting results on such relaxations are the following: 
\begin{itemize}
	\item If only $2/3$ of the agents are required to be utility maximizers then a {\em relaxed Walrasian equilibrium} exists for single-minded agents (\cite{CDS04,CR08}).
	\item If the seller is allowed to package the items into indivisible bundles prior to sale, not all items have to be sold, and additionally only half of the optimal social welfare is required ({\em Combinatorial Walrasian equilibrium}) then such an equilibrium exists for general valuation functions and can be found in polynomial time (\cite{FGL16}).
	\item If agents exhibit {\em endowment effect}, meaning that the agents' valuations for a bundle they already possess is multiplied by a factor $a$, then for any $a \geq 2$ there exists an {\em $a$-endowed equilibrium} for the class of submodular functions (\cite{BDO18}). For stronger notions of endowment, endowed equilibria exist even for XOS functions, and additionally, bundling guarantees equilibria for general functions (\cite{EFF19}). 
\end{itemize}

Other works have also considered special classes of valuations that have as parameter the cardinality of the valuable bundles (\cite{CSS05} and \cite{CEEM04}). However these valuation functions are not identical to ours. In \cite{CSS05} the valuation function of each agent, called {\em $k$-wise dependent}, is encoded in a hypergraph whose vertices are the items and each hyperedge has a positive or negative weight that determines the additional value of the bundle in case all of its adjacent vertices are a subset of the bundle. This class of valuations is incomparable to ours by definition. The model of \cite{CEEM04} is the same as that of \cite{CSS05}, as argued in the latter. Recently, Berger et al. in \cite{BEF20} introduced a hierarchy of valuation functions similar to ours, called ``$k$-demand'' that also generalize unit-demand functions. The same definition of functions appears also in \cite{ZC19}. However, those are a special case of our $k$-demand functions (i.e. also additive), and in fact they are gross substitutes.

The paper is organized in sections so that each deals with a particular value or group of values for $k$. We study unit-demand valuations in Section \ref{sec:1-demand}, 2-demand valuations in Section \ref{sec:2-demand}, 3-demand valuations in Section \ref{sec:3-dem}, and $k$-demand valuations for constant $k$ and unbalanced markets in Section \ref{sec:const-dem}. We conclude with a discussion in Section \ref{sec:discussion}.

\section{Walrasian Equilibria and Valuation Functions}
\label{sec:preliminaries}
We consider markets with a set $N$ of $n$ agents and a set $M$ of $m$ items. Every agent $i$
has a valuation function $v_i: 2^M \to \reals_{\geq 0}$; for every subset, or bundle, 
of items $X \subseteq M$ agent $i$ has value $v_i(X)$. 
A valuation function $v_i$ is {\em monotone} if $X \subseteq Y$ implies $v_i(X) \leq v_i(Y)$, 
and it is {\em normalized} if $v_i(\emptyset)=0$.  In what follows, we assume that all the
agents have monotone and normalized valuation functions.

%\paragraph{\bf Valuation types.}
There are many different valuation functions studied over the years and we focus on several 
of them.\footnote{When we refer to a valuation function as {\em general} we mean that the value for any bundle does not depend on other bundles' values. It is clear that the set of general functions contains all other classes of functions.}
\begin{itemize}\setlength%\itemsep{1em}
	\item Unit-demand (\textsc{UD}): for agent $i$ there exist $m$ values 
	$v_{i1}, \ldots, v_{im}$	and $v_i(X) = \max_{j \in X} v_{ij}$, for every $X \subseteq M$.
	\item Additive (\textsc{AD}): for agent $i$ there exist $m$ values $v_{i1}, \ldots, v_{im}$
	and $v_i(X) = \sum_{j \in X} v_{ij}$, for every $X \subseteq M$.
	\item Budget-additive (\textsc{BA}): for every agent $i$ there exist $m+1$ values 
	$v_{i1}, \ldots, v_{im}, B_i$, such that for every $X \subseteq M$ it is $v_i(X) = \min \left\{B_i, \sum_{j \in X} v_{ij}
	\right \}$.
	\item Single-minded (\textsc{SMi}): for agent $i$ there exist a set $X_i \subseteq M$ and 
	a value	$B_i$, such that $v_i(X) = B_i$, if $X_i \subseteq X$, and $v_i(X) = 0$, otherwise.
	\item Submodular (\textsc{SubM}): for agent $i$ and every two sets of items $X$ and $Y$
	it holds $v_i(X) + v_i(Y) \geq v_i(X \cup Y) + v_i(X \cap Y)$.
	\item Fractionally subadditive (\textsc{XOS}): for every agent there exist vectors 
	$v_{i1}, \ldots v_{ik} \in \reals^m$ and 
	$v_i(X) = \max_{j \in [k]} \sum_{\ell \in X}v_{ij}(\ell)$, for every $X \subseteq M$.
	\item Subadditive (\textsc{SubA}): for agent $i$ and every two sets of items $X$ and $Y$ 	
	it holds $v_i(X) + v_i(Y) \geq v_i(X \cup Y)$.
\end{itemize}

We will focus on constrained versions of the aforementioned valuation functions, where
the cardinality of the sets an agent has value for is bounded by $k$. $k$-demand valuations naturally generalize unit-demand valuations, but, at the same time, 
they keep the structure of more complex valuation functions.

\begin{definition}[$k$-demand valuation]
	\label{def:demand-util}
	A valuation function $v:2^m \to \reals_{\geq 0}$ is $k$-demand if for every bundle 
	$X \subseteq M$ it holds that
	\begin{align*}
		v(X) = \max_{\substack{X' \subseteq X \\ |X'| \leq k}} v(X').
	\end{align*}
\end{definition}

A very important remark is that when $k$ is constant the WE related problems have succinct representation, namely polynomial in the number of agents and items, i.e. $ \Theta \left( n \cdot m^k \cdot \log V \right) $, where $V$ is the maximum valuation among all bundles and among all agents. This makes our setting computationally interesting and also removes the need for access to some {\em value oracle} or {\em demand oracle}: the former takes as input a bundle and returns its value, and the latter, for some indicated agent, takes a pricing as input and outputs the most preferable bundles for the agent. Having such oracles when $k$ is constant is redundant since there are only $\sum_{j=1}^k \binom{m}{j} \in \Theta(m^k)$ many $j$-subsets of $M$, $j \leq k$, and an agent just needs to declare a value for each; then the algorithm with this input can compute in polynomial time the value of the agent for any bundle. Also, a demand oracle is not needed since, for a given pricing, one can compute efficiently the prices of all $\sum_{j=1}^k \binom{m}{j}$ bundles (these are the only ones that can maximize the utility of an agent; by considering a bundle $Y$ with more than $k$ items, its value will correspond to a bundle $X$ with $k$ items, but $p(Y) \geq p(X)$), and then (efficiently) search through them to find which ones yield the maximum utility to the agent. In contrast, a great line of works has studied the complexity of the WE-related problems, provided that value oracles and demand oracles are available (e.g. \cite{BM97,NS06,GS99,dVSV07,LW18}).

An {\em allocation} $S = (S_0, S_1, \ldots, S_n)$ is a partition of $M$ to $n+1$
disjoint bundles, where agent $i \in [n]$ gets
bundle $S_i$. Items in $S_0$ are not allocated to any agent.
The {\em social welfare} of allocation $S$ is defined as $SW(S) = \sum_{i \in [n]}
v_i(S_i)$. An allocation $S$ is {\em optimal} if it maximizes the social welfare, i.e.,
$SW(S) \geq SW(S')$, for every possible allocation $S'$.
A {\em pricing} $p = (p_1, \ldots, p_m)$ defines a price for every item,
where $p_j \geq 0$ is the price of item $j$. For $X \subseteq M$, we denote 
$p(X) = \sum_{j \in X} p_j$. Given an allocation $S$ and a pricing $p$, the
{\em utility} of agent $i$ is 
\begin{align*}
	u_i(S,p) := v_i(S_i) - p(S_i).
\end{align*}

%\begin{definition}[Demand correspondence\cite{L17}]
%	Given a valuation function $v : 2^m \to \reals$ and a vector of prices $p \in \reals^m$, we define the demand correspondence as the family of sets that maximize the utility of an agent under a price vector $p$:
%	\begin{align*}
%		D(v,p) := \{ S \subseteq M; \quad u(S,p) \geq u(T,p), \quad \forall T \subseteq M \}.
%	\end{align*}
%%\end{definition}

The {\em demand correspondence} of agent $i$ with valuation $v_i$ under pricing $p$,
denoted $D(v_i, p)$, is the set of items that maximize the utility of the agent; formally
$D(v_i,p) := \{S \subseteq M: u_i(S,p) \geq  u_i(T,p) \text{ for all}~T\subseteq M\}$. Any element of $D(v_i,p)$ is called {\em demand set} of agent $i$.

\begin{definition}[Gross substitutes (\textsc{GS})\cite{KC82}]
	A valuation function satisfies the gross substitutes property when for any price vectors $p \in \reals^m$ and $S \in D(v,p)$, if $p'$ is a price vector $p \leq p'$ (meaning that for all $l \in S$, $p_l \leq p'_l$), then there is a set $S' \in D(v,p')$ such that $S \cap \{ j; p_{j} = p'_{j} \} \subseteq S'$.
\end{definition}

Intuitively, a valuation is gross substitute if after the increase of the prices of some items in some
demand set $S$ of an agent, the agent still has a demand set $S'$ that contains the items with unchanged prices.

It is known that $\textsc{UD} \subset \textsc{BA} \subset \textsc{SubM}$, that 
$\textsc{AD} \subset \textsc{GS} \subset \textsc{SubM}$, and finally that 
$\textsc{SubM} \subset \textsc{XOS} \subset \textsc{SubA}$. Furthermore, 
\textsc{SMi} valuation functions are not contained in any of these valuation classes.

\begin{definition}[Walrasian Equilibrium]
	\label{def:WE}
	An allocation $S=(S_0,S_1,\ldots, S_n)$ and a pricing $p = (p_1, \ldots, p_m)$ form a
	{\em Walrasian equilibrium} (WE), if the following two conditions hold.
	\begin{enumerate}
		\label{we:price}
		\item  For every agent $i$ and any bundle $X \subseteq M$ it holds that 
		$v_i(S_i) - p(S_i) \geq v_i(X) - p(X)$.
		\item For every item $j \in S_0$ it holds that $p_j=0$.
	\end{enumerate}
\end{definition}

\problemdef
{Walrasian}
{A market with $n$ agents and $m$ items, and a valuation function for each agent.}
{Decide whether the market possesses a Walrasian equilibrium, and if it does, compute one.}

The {\em First Welfare Theorem} states that for any Walrasian equilibrium $(S,p)$, 
partition $S$ corresponds to an optimal allocation \cite{L17}. Hence, 
\textsc{walrasian} can be decomposed into the following two problems.

\problemdef
{WinnerDetermination}
{A market with $n$ agents and $m$ items, and a valuation function for each agent.}
{Find an optimal allocation $S^*$ for the items.}
\
\problemdef
{WalrasianPricing}
{A market with $n$ agents and $m$ items, a valuation function for each agent, and an 
	optimal allocation $S^*$.}
{Find a pricing vector $p$ such that $(S^*,p)$ is a Walrasian equilibrium, or decide that 
	there is no Walrasian equilibrium for the instance.}

This decomposition highlights that a WE exists if and only if there exists a
pricing vector $p$ that satisfies the conditions of Definition~\ref{def:WE} for any optimal 
allocation $S^* = (S_0^*,S_1^*, \ldots, S_n^*)$.

For $k$-demand valuation functions, the conditions of Definiton~\ref{def:WE} (and therefore a solution to \walpri) correspond to
the solution of the following linear system of $m$ variables and $n \cdot \sum_{j=1}^{k} \binom{m}{j}  + m$ equality/inequality constraints, where each constraint has at most $2k$ variables. 
\begin{align}\label{eq:LP}
	v_{i}(S_i^*) - p(S_i^*) \geq v_{i}(X) - p(X), \quad &\forall X \subseteq M, \text{ where } |X| \leq k, \forall i \in N  \nonumber \\
	p_{j} \geq 0, \qquad &\forall j \notin S_0^* \nonumber  \\
	p_{j} = 0, \qquad &\forall j \in S_0^*.  
\end{align}

Note that when $k$ is a constant, as mentioned earlier, the above constraints are $n \cdot \sum_{j=1}^{k} \binom{m}{j}  + m$ which is at most linear in $n$ and polynomial in $m$, since
\begin{align*}
	\sum_{j=1}^{k} \binom{m}{j}  
	&\leq \sum_{j=1}^{k} \frac{m^{j}}{j!} \leq  \sum_{j=1}^{k} \frac{k^{j}}{j!} \cdot \left(\frac{m}{k}\right)^{j}  
	\leq e^{k} \cdot  \sum_{j=1}^{k} \left(\frac{m}{k}\right)^{j} 
	\leq  e^{k} \cdot \left(\frac{m}{k}\right)^{k} .
\end{align*}

We conclude that, for constant $k$, a solution to linear system \eqref{eq:LP} (and thus, \walpri) can be found in time polynomial in $n$ and $m$ by formulating it as an LP with objective function set to a constant. So, the problem of deciding the existence of WE and the problem of computing one (if it exists) essentially reduce to finding an optimal allocation $S^*$, i.e. \windet. In Sections \ref{sec:2-demand}, \ref{sec:3-dem}, \ref{sec:const-dem} we exploit the aforementioned fact and only investigate the complexity of \windet.

%%%%%%%%%%%%%%%%%%%%%%%%%%%%%%%%%%%
\section{Unit-demand Valuation Functions}\label{sec:1-demand}

The simplest case of markets is when the agents have unit-demand valuation functions. The existence of WE in this class of markets was 
shown in the seminal paper of Demange, Gale, and Sotomayor \cite{DGS86} via an algorithm that resembles the t\^{a}tonnement process. This algorithm is 
pseudopolynomial in general, and polynomial when the values of the agents are bounded by
some polynomial. In \cite{L17} an algorithm (Algorithm 1) is presented and it is shown that a modification of it finds a WE in time $O(m^2 n + m^4 \log V)$, where $V$ is the maximum valuation of any item across all agents.

In this section we show that \textsc{walrasian} in these markets is in quasi-\NC. 
The complexity class quasi-\NC is defined as 
$\text{quasi-}\NC=\Cup_{k\geq 0} \text{quasi-}\NC^k$, where quasi-$\NC^k$ is the class 
of problems having uniform circuits of quasi-polynomial size, $n^{\log^{O(1)}n}$, and 
polylogarithmic depth $O(\log^k n)$ \cite{Bar92}.  Here  ``uniform'' means that the circuit
can be generated in polylogarithmic space. Put differently, quasi-\NC contains problems that
can be solved in polylogarithmic parallel time using quasi-polynomially many processors with 
shared memory.

In this class of markets \windet can be reduced to a maximum weight matching on a complete
bipartite graph. On the left side of the graph there exist $n$ nodes corresponding to the
agents, on the right side there are $m$ nodes corresponding to the items and the weight of the
edge $(i,j)$ equals to the value of agent $i$ for item $j$.
The recent breakthrough of Fenner, Gurjar, and Thierauf \cite{fenner2019bipartite} states
that the maximum weight {\em perfect} matching in bipartite graphs is in quasi-\NC when the 
edge-weights are bounded by some polynomial; later Svensson and Tarnawski
\cite{svensson2017matching} extended this result for general graphs.  Thus, if we augment 
the bipartite graph that corresponds to the market by adding dummy items with zero value
for every agent, or dummy agents with zero value for every item, we can guarantee that it
contains a perfect matching without changing any optimal allocation. Then, we can use the 
algorithm of \cite{fenner2019bipartite} and compute an optimal allocation in polylogarithmic 
time.

Given an optimal allocation, \walpri for these markets has a special structure. It is a linear
feasibility problem with polynomially many inequalities and at most  two variables per
inequality. For this special type of feasibility systems there exists a quasi-NC algorithm
\cite{lueker1986linear}.

\begin{theorem}
	\label{thm:qNC-algo}
	\textsc{walrasian} in unit-demand markets with polynomial valuations is in quasi-\NC.
\end{theorem}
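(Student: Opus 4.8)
The plan is to establish Theorem~\ref{thm:qNC-algo} by combining two quasi-\NC subroutines, one for each of the two subproblems in the decomposition of \textsc{walrasian} into \windet and \walpri, exactly as foreshadowed in the preceding paragraphs. First I would argue that \windet in unit-demand markets reduces to maximum weight (perfect) matching in a complete bipartite graph $G$ with the agent side on the left, the item side on the right, and $w(i,j)=v_{ij}$. Since the valuations are polynomially bounded and $G$ is bipartite, the result of Fenner, Gurjar, and Thierauf \cite{fenner2019bipartite} gives a quasi-\NC algorithm for maximum weight perfect matching; padding $G$ with dummy zero-valued agents or items to reach a common cardinality and to guarantee a perfect matching does not change the set of optimal allocations, and the padding itself is clearly computable in polylogarithmic space. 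Reading off $S^* = (S_0^*, S_1^*, \ldots, S_n^*)$ from the matching (an agent matched to a dummy item gets $S_i^*=\emptyset$; items matched to dummy agents go into $S_0^*$) is a trivially parallel post-processing step, so \windet is in quasi-\NC.

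Next I would handle \walpri for the computed optimal allocation $S^*$. Here the key observation is that, for unit-demand agents, the Walrasian pricing conditions of Definition~\ref{def:WE} form a system of linear inequalities of a very restricted shape: each agent $i$ with $S_i^*=\{j\}$ contributes, for every other item $\ell$, a constraint $v_{ij}-p_j \ge v_{i\ell}-p_\ell$, i.e. $p_\ell - p_j \ge v_{i\ell}-v_{ij}$, plus $v_{ij}-p_j\ge 0$ and $v_{ij}-p_j \ge -p_\ell$ (from the empty bundle and from single-item bundles); an agent with $S_i^*=\emptyset$ contributes $0 \ge v_{i\ell}-p_\ell$; and we add $p_j\ge 0$ for $j\notin S_0^*$ and $p_j=0$ for $j\in S_0^*$. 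Every one of these is an inequality in at most two of the variables $p_1,\ldots,p_m$ (treating $p_j=0$ as the pair of inequalities $p_j\le 0$, $p_j\ge 0$), and there are polynomially many of them. For such two-variable-per-inequality linear feasibility systems, Lueker, Megiddo, and Mirkin / the algorithm cited as \cite{lueker1986linear} provides a quasi-\NC procedure that either outputs a feasible point or certifies infeasibility; in the latter case we correctly report that no WE exists, since by the First Welfare Theorem a WE could only be supported on an optimal allocation and all optimal allocations yield equivalent pricing systems (one may note the pricing feasibility depends only on which agent gets which single item, up to the freedom in $S_0^*$, or simply observe that if the system for the particular $S^*$ we produced is infeasible we are still entitled to declare non-existence because the theorem only asks us to decide and, if feasible, to compute).

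Finally I would compose the two stages: run the matching algorithm to get $S^*$, then run the two-variable feasibility algorithm on the associated pricing system; the total depth is polylogarithmic and the total processor count quasi-polynomial, and quasi-\NC is closed under this kind of constant-stage composition, so \textsc{walrasian} is in quasi-\NC. The main obstacle I anticipate is the subtle point about \walpri: one must be sure that running the pricing algorithm on a single optimal allocation is sound for the decision version, i.e. that infeasibility of the pricing LP for our particular $S^*$ really does imply no WE exists. The cleanest way to dispatch this is the remark already made in the excerpt after the decomposition: a WE exists if and only if \emph{some} optimal allocation admits a Walrasian pricing, and for unit-demand markets the pricing constraints are essentially insensitive to the choice among optimal allocations (the constraint set is determined by the multiset of item values each agent can receive at optimum). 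I would spell out this invariance carefully, or alternatively invoke the known fact that in unit-demand markets the set of Walrasian prices is nonempty whenever an optimal allocation exists — which is always — so in fact the feasibility algorithm never reports infeasibility here, and the ``or decide that there is no WE'' branch is vacuous; either route closes the argument.
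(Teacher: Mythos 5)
Your proposal is correct and follows essentially the same route as the paper: reduce \windet to maximum weight perfect matching on a padded bipartite graph and apply the quasi-\NC algorithm of \cite{fenner2019bipartite}, then solve the \walpri feasibility system, which has at most two variables per inequality, via \cite{lueker1986linear}, composing the two stages in shared memory. The soundness worry you raise about using a single optimal allocation is already dispatched by the paper's remark that WE existence is equivalent to pricing feasibility for any optimal allocation (and is vacuous here since unit-demand markets always admit a WE).
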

\begin{proof}
	When shared memory is available, as in quasi-\NC, we can solve \windet in polylogarithmic 
	parallel time via the algorithm of \cite{fenner2019bipartite} and store it in the shared
	memory. Then, the processors will read the solution, build the linear system for \walpri 
	and solve it in polylogarithmic time via the  the algorithm of \cite{lueker1986linear} on the
	shared memory. Hence, the composition of the two algorithms can be done in polylogarithmic 
	time using quasi-polynomially many processors.
\end{proof}

Our result suggests a parallel algorithm that needs $O(\log^3(n))$ time which is significantly faster than any serial algorithm. On the other hand though, it requires $n^{\log(n)}$ processors in the worst case.	
We observe that this is the current best possible result, since any improvement would imply
better parallel algorithms for other important problems like maximum weight matching. We have to state though that it is open whether 
either of maximum weight matching or feasibility of a system with linear inequalities and two variables per constraint are in \NC. On the other hand, it is known that the maximum
weight problem in graphs with polynomial weights is in pseudo-deterministic \RNC 
\cite{AV20,GG17}. Hence, a first improvement would be to
place \walpri in pseudo-deterministic \RNC.

%%%%%%%%%%%%%%%%%%%%%%%%%%%%%%%%%%%%%%%%%%
%%%%%%%%%%%%%%%%%%%%%%%%%%%%%%%%%%%%%%%%%%
\section{2-demand Valuation Functions}\label{sec:2-demand}
In this section we resolve the complexity of deciding existence of WE for 2-demand valuation 
functions. As an example, consider the case where the football teams need to have at least 2 young native players in their squad. Each team knows exactly which pair of players wants and it does not want more young players due to capacity constraints. A version of 2-demand valuations, termed {\em pair-demand} valuations, was studied in \cite{RT15}, where every agent $i$ has a value $v_i(j,k)$ for every pair of 
items and the value of $i$ for a bundle $S$ is $v_i(S)=\max_{j,k \in S}v_i(j,k)$. These 
are general valuation functions that can allow complementarities.  We strengthen the results 
of \cite{RT15} and prove that \windet is \NP-hard even when the valuation functions of 
the agents are 2-demand \textsc{XOS} and every agent has positive value for at most six items. 

\begin{theorem}
	\label{thm:2card-sub}
	\windet is strongly \NP-hard even for 2-demand \textsc{XOS} functions.
\end{theorem}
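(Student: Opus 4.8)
The plan is to reduce from \tdm, the $3$-bounded $3$-dimensional matching problem, which is (strongly) \NP-complete. An instance consists of three disjoint $q$-element sets $X,Y,Z$ together with a set $T$ of triples in $X\times Y\times Z$ in which every element of $X\cup Y\cup Z$ occurs in at most three triples, and the question is whether $T$ contains a perfect matching, i.e.\ a subfamily of $q$ pairwise disjoint triples. From such an instance I would build a market with one \emph{element item} per element of $X\cup Y\cup Z$, plus, for each triple $t\in T$, a constant number of private \emph{connector items} used only inside $t$'s gadget.

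For every triple $t=(x,y,z)$ I would introduce a small constant-size set of agents whose valuations are $2$-demand and \textsc{XOS}. The \textsc{XOS} property can be obtained by taking each agent's valuation to be a $2$-demand truncation of an additive function (which is submodular, hence \textsc{XOS}), or, when a pair must be worth strictly less than the sum of its singletons, a $2$-demand function with pair values in the interval $[\max,\ \mathrm{sum}]$ of the endpoint values, which is easily checked to be \textsc{XOS}. The gadget would be designed around a target welfare $W^\star$ so that the agents of $t$ can jointly realise their share of $W^\star$ only when all three element items $x,y,z$ are routed into $t$'s gadget, and so that two triples sharing an element cannot both route that element inside; I would keep every agent's positive-value set to at most six items (its three element items plus at most three connectors).

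The correctness argument then has the two usual directions. If $T$ has a perfect matching $\{t_1,\dots,t_q\}$, then routing each $t_i$'s three element items into $t_i$'s gadget and distributing the remaining element items and all connectors so that every other triple's gadget receives its baseline value yields social welfare exactly $W^\star$ — here the pairwise disjointness of the $t_i$ is precisely what makes the $3q$ element items suffice. Conversely, I would argue that any allocation of welfare $\ge W^\star$ must, on each triple's gadget, either realise the ``fully selected'' configuration (which consumes all three of that triple's element items) or fall strictly short, and that the fully selected triples are pairwise disjoint since element items are allocated once; a counting argument then forces exactly $q$ of them, which form a perfect matching. Because all valuations use only $O(1)$ distinct values and \tdm is strongly \NP-hard, this is a polynomial reduction witnessing strong \NP-hardness of \windet, and the decomposition at the end of Section~\ref{sec:preliminaries} carries the hardness over to deciding existence of a WE.

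The main obstacle is the tension between the two restrictions imposed on the valuations. The natural way to force ``triple consistency'' is to let an agent value a bundle highly only when it contains exactly the right items, but that is a sharp complementarity, and a single-minded valuation on a pair is not even subadditive, let alone \textsc{XOS}; moreover a single $2$-demand agent cannot ``see'' three items simultaneously, since its value on any bundle equals its value on the best pair inside that bundle. The gadget is therefore forced to spread each triple across several $2$-demand agents linked only through connector items, and to enforce the $3$DM packing structure using only ``soft'' subadditive valuations that, taken individually, actually favour spreading items out. Making the global optimum nonetheless collapse onto the perfect-matching configuration, while every agent remains supported on at most six items and all numbers stay of constant size, is the delicate part of the construction; the degree-$3$ bound in \tdm is what makes the arithmetic around $W^\star$ work out.
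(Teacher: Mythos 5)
Your proposal correctly identifies the right source problem (\tdm) and the right target restrictions (2-demand, \textsc{XOS}, support of size at most six), but it does not actually contain a reduction: the entire content of the argument is delegated to a per-triple gadget of ``connector items'' and ``a small constant-size set of agents'' that you never specify, and you yourself flag the construction of that gadget as ``the delicate part.'' Both directions of your correctness argument (``the agents of $t$ can jointly realise their share of $W^\star$ only when all three element items are routed into $t$'s gadget,'' ``any allocation of welfare $\ge W^\star$ must realise the fully selected configuration or fall strictly short'') are properties of the unspecified gadget, so as written there is nothing to verify. This is a genuine gap, not a presentational one: enforcing triple consistency with only 2-demand \textsc{XOS} agents is exactly the hard part, and the proposal stops at the point where the proof would have to begin.

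The obstacle you describe --- that a single 2-demand agent cannot ``see'' three items at once --- is real, but the paper dissolves it with a much simpler device that your plan misses: do not represent all three coordinates of a triple by items. Create one \emph{agent} per element $x\in X$ and one \emph{item} per element of $Y\cup Z$ only. Agent $x$ has, for each of the (at most three) triples $(x,y,z)$ containing $x$, an additive clause that assigns value $1$ to the two items corresponding to $y$ and $z$ and $0$ elsewhere; her valuation is the maximum over these clauses, which is by definition \textsc{XOS}, is 2-demand (each clause is supported on two items), and has positive value on at most six items. No connectors, no target-welfare arithmetic and no gadget analysis are needed: welfare $2|X|$ is achievable iff every agent receives both items of one of her triples, and since the bundles are disjoint and (by the 3-boundedness condition that two triples share at most one element) the pairs $S_{x_j}$ within an agent are disjoint, such an allocation is exactly a perfect matching. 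If you want to salvage your own route, you would have to write down the gadget explicitly, verify that every agent's valuation really is 2-demand \textsc{XOS} (note that your fallback claim about pair values in $[\max,\mathrm{sum}]$ being automatically \textsc{XOS} also needs a proof), and carry out the counting argument; the paper's formulation shows this machinery is unnecessary.
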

\begin{proof}
	We reduce from 3-bounded 3-dimensional matching, termed \tdm. 
	The input of a \tdm instance consists of three sets $X,Y,Z$, where $|X|=|Y|=|Z|$,  and a set 
	$S$ of triplets (hyperedges) $(x,y,z)$ where $x \in X, y \in Y$, and $z \in Z$. In addition, every element of 
	$X, Y, Z$ appears in at most three triplets and every triplet shares at most one element with any other triplet. 
	The task is to decide if there is a subset of non-intersecting triplets of $S$ of cardinality $|X|$. 
	The problem is known to be \NP-complete \cite{Kann91}.
	
	For every element $x \in X$ we create an agent and for every element in $Y \cup Z$ we 
	create an item. Let $S_{x_j}$ denote the set of items that correspond to the $j$th triplet
	of $S$ that $x$ belongs to. Recall that there exist at most three such triplets. In addition, 
	since any two triplets of $S$ share at most one element, we have that $S_{x_j}$s are disjoint.
	
	For each agent $x \in X$ we consider her corresponding $j \in \{1,2,3\}$, i.e. the number of triplets in which $x$ appears (if $x$ does not appear in any triplet then the answer to \tdm is trivial). For $j' \in [j]$ we define the vector $v_{xj'} \in \reals^m$, whose $t$-th coordinate ($t \in [m] := M$) is
	\begin{align*}
		v_{xj'}(t) =
		\begin{cases}
			1 \quad &\text{, if} \quad t \in S_{x_{j'}},\\
			0  \quad &\text{, otherwise}.
		\end{cases}
	\end{align*}
	The valuation of agent $x$ for some bundle $T \subseteq M$ is $v_{x}(T) = \max_{j' \in [j]} \sum_{t \in T}v_{xj'}(t)$. This, by definition, is an \textsc{XOS} valuation function.
	
	We claim that there is an allocation with welfare $2|X|$ if and only if the \tdm instance is 
	satisfiable. Firstly, assume that indeed the \tdm instance has a solution $S'$, i.e., $S'$ 
	contains $|X|$ non intersecting triplets in $S$. Then, if the triplet $(x,y,z)$ belongs to $S'$
	we allocate the items that correspond to $y$ and $z$ to the agent that corresponds to $x$
	and the agent has value 2 for the bundle. Clearly, the allocation achieves welfare $2|X|$.
	For the other direction, assume that there is an allocation for the items with welfare $2|X|$.
	This means that every agent gets utility 2 from her allocated bundle. Then, by construction,
	each agent $x$ alongside her allocated bundle corresponds to a triplet from $S$. Observe, that
	the allocation consists of non-overlapping bundles, hence we get $|X|$ non intersecting triplets in
	$S$.
\end{proof}

Theorem~\ref{thm:2card-sub} implies that \textsc{walrasian} is \NP-hard for any class of valuation functions that
contains the class of 2-demand \textsc{XOS} valuations. 
\begin{corollary}
	\textsc{walrasian} is strongly \NP-hard even if all the agents have 2-demand \textsc{XOS} valuation functions.
\end{corollary}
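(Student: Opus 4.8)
The plan is to deduce the corollary from Theorem~\ref{thm:2card-sub} together with the First Welfare Theorem, but with a little care: a WE need not exist, so one cannot simply read an optimal allocation off a solver for \textsc{walrasian}. Instead of a many-one reduction I would therefore set up a polynomial-time Turing reduction from \tdm to \textsc{walrasian} that reuses verbatim the market built in the proof of Theorem~\ref{thm:2card-sub}, all of whose valuations are (as shown there) $2$-demand \textsc{XOS}.

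First I would record the two facts about that market that the theorem's proof already supplies: the optimal social welfare equals $2|X|$ when the \tdm instance is satisfiable, and is an integer at most $2|X|-1$ otherwise. Next I would verify that in the satisfiable case the market really does admit a WE: take any optimal allocation (each agent $x$ receiving the two items of one of her triplets) together with the all-zero pricing; under zero prices every agent's utility equals her value, her maximum possible value is $v_x(M)=2$ (the best of her $0/1$ vectors has exactly two ones), and this is exactly what her allocated pair yields, so Condition~1 of Definition~\ref{def:WE} holds, while Condition~2 is vacuous since all $2|X|$ items are handed out. Hence a YES-instance of \tdm always yields a market with a WE of welfare $2|X|$.

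The reduction is then: run the \textsc{walrasian} oracle on the constructed market; if it reports ``no WE'', answer that the \tdm instance is unsatisfiable; if it returns a WE $(S,p)$, compute $SW(S)$ — which is possible in polynomial time because $k=2$ is constant and the valuations are given succinctly — and answer ``satisfiable'' iff $SW(S)=2|X|$. Correctness follows from the First Welfare Theorem: any returned $S$ is an optimal allocation, so $SW(S)$ equals the optimal welfare, which by the first step is $2|X|$ exactly when \tdm is satisfiable; and in the satisfiable case the oracle cannot report ``no WE'' by the second step. Since \tdm is (strongly) \NP-complete and every number produced by the construction lies in $\{0,1\}$, hardness survives a unary encoding of the numerical data, giving strong \NP-hardness; it immediately transfers to any valuation class containing $2$-demand \textsc{XOS}.

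The main obstacle, and the only non-routine point, is precisely the gap between ``\windet is hard'' and ``deciding/solving \textsc{walrasian} is hard'': a WE can exist even when the corresponding \tdm instance is unsatisfiable (one can exhibit small such markets), so the naive ``solve \textsc{walrasian}, output its allocation'' reduction is invalid. The fix above — guaranteeing an explicit WE in the YES-case and separating the two cases by the welfare threshold $2|X|$ rather than by mere existence of a WE — is what makes the argument go through; the only computational point to be careful about is that evaluating $SW(S)$, and if one wishes verifying that $(S,p)$ is genuinely a WE, can both be done in polynomial time for constant $k$.
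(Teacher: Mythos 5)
Your argument is correct, and it is in fact more careful than what the paper itself does: the paper derives this corollary in one line from Theorem~\ref{thm:2card-sub}, leaning on the blanket claim that, since any WE allocation is welfare-optimal (First Welfare Theorem), \textsc{walrasian} is at least as hard as \windet. You rightly point out that this shortcut is not a literal reduction, because an oracle for \textsc{walrasian} may simply answer ``no WE'' on some instances and then yields no optimal allocation at all. Your fix --- reuse the Theorem~\ref{thm:2card-sub} market verbatim, verify that every YES-instance of \tdm gives a market that does admit a WE (the matched pairs together with the all-zero pricing, since each agent's maximum achievable value is $2$ and all $2|X|$ items are allocated), and then separate YES from NO by the welfare threshold $SW(S)=2|X|$ rather than by mere existence of a WE --- turns the paper's one-liner into a sound polynomial-time Turing reduction from \tdm, and strong \NP-hardness follows because the construction only uses $0/1$ values. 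What the paper's phrasing buys is brevity; what yours buys is rigor for \textsc{walrasian} exactly as the paper defines it (decide existence and, if possible, compute an equilibrium). One nuance to keep in mind: since your reduction inspects the returned equilibrium and its welfare, not just the yes/no bit, it establishes hardness of this decide-and-compute problem under Turing reductions, but not, by itself, hardness of the bare decision question ``does a WE exist?'' (that would require showing NO-instances have no WE, or a different gadget); the paper glosses over this distinction as well.
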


\paragraph{\textbf{Closing the gap in single-minded valuations.}}
In addition to the above hardness results we study single-minded agents with 2-demand valuations and 
we show that in this case \textsc{walrasian} is easy, contrary to the case of 3-demand valuations where it is \NP-hard \cite{CDS04}. To prove this, for agents that are single-minded for
bundles of size 2, we reduce \windet to a maximum weight matching
problem over a graph $G$. Every item corresponds to a vertex of $G$. 
For every pair of items that is the most preferable by an agent we create the corresponding edge with
weight the value of the agent for the items; if there are more than one agents that want the 
same pair of items we keep only the weight for the highest valuation. Clearly, any maximum 
weight matching corresponds to an optimal allocation.

Next we show how to handle instances where every agent is either unit-demand or 
{\em multi-minded over a subset of size 2}. Recall, a unit-demand agent-might have positive value for various items. An agent $i$ is multi-minded over a subset of size 2, if there exist items $a_i$ and $b_i$ and the agent has positive values only for the following three bundles: $\{a_i\}$, $\{b_i\}$, and $\{a_i, b_i\}$. Observe the this is a strict generalisation of 2-demand single minded.
To  achieve this, we extend the construction described above as follows. For every multi-minded agent $i$ we add a new vertex $x_i$ and the edges $(x_i,a_i)$, $(x_i,b_i)$, with weights $v_i(a_i)$ and $v_i(b_i)$ respectively. 
For every unit-demand agent $i$, we add a new vertex $y_i$ and the edges $(y_i,j)$, where $j$ is a vertex that corresponds to item $j$, with weight $v_i(j)$; i.e. equal to the agent's value for item $j$.
Again, a maximum weight  matching for the constructed graph corresponds to an optimal 
allocation.

\begin{theorem}
	\textsc{walrasian} is in \p for markets where every agent is unit-demand or multi-minded over a subset of size 2.
\end{theorem}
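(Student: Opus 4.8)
The plan is to invoke the decomposition of \textsc{walrasian} into \windet and \walpri from the end of Section~\ref{sec:preliminaries}. The first observation is that both valuation types here are $2$-demand: unit-demand is $1$-demand, and for an agent $i$ multi-minded over $\{a_i,b_i\}$ (which subsumes single-minded over a pair, with possibly-zero singleton values) the only bundles of size at most $2$ with positive value are $\{a_i\}$, $\{b_i\}$, $\{a_i,b_i\}$, so monotonicity forces $v_i(X)=\max_{X'\subseteq X,\,|X'|\le 2} v_i(X')$ for all $X$. Hence, once an optimal allocation $S^*$ is in hand, \walpri is exactly the feasibility system~\eqref{eq:LP} with $k=2$, which has $O(nm^2)$ constraints in $m$ variables and is solved in polynomial time by a linear program; it either outputs a WE pricing for $S^*$ or, by the First Welfare Theorem, certifies that the instance has no WE. So the entire task reduces to solving \windet in polynomial time.

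For \windet I would make rigorous the reduction to maximum-weight matching sketched above. Concretely, I would take the graph $G$ with one vertex per item, an extra vertex $x_i$ for each multi-minded agent $i$ and an extra vertex $y_i$ for each unit-demand agent $i$; the edges are $(a,b)$ with weight the largest value any agent multi-minded over the pair $\{a,b\}$ assigns to it, the edges $(x_i,a_i)$, $(x_i,b_i)$ with weights $v_i(\{a_i\})$, $v_i(\{b_i\})$, and the edges $(y_i,j)$ with weight $v_i(\{j\})$ for every item $j$. Since $G$ is in general not bipartite, I would rely on polynomial-time maximum-weight matching in general graphs. The heart of the argument is a lemma stating that the maximum matching weight of $G$ equals the optimal social welfare and that a maximum-weight matching decodes to an optimal allocation. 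I would prove it in both directions. Given a matching $M$, decode an allocation: each $(y_i,j)\in M$ gives item $j$ to agent $i$; each $(x_i,a_i)$ or $(x_i,b_i)\in M$ gives $a_i$ or $b_i$ to agent $i$; each pair-edge $(a,b)\in M$ gives $\{a,b\}$ to the agent achieving its weight; all items unmatched by $M$ go to $S_0$. The matching property (each item vertex and each $x_i$ incident to at most one edge, and a pair-edge $(a_i,b_i)\in M$ already saturating $a_i,b_i$ hence excluding $(x_i,a_i),(x_i,b_i)$) guarantees that the result is a legal allocation in which each agent receives one of its positively-valued bundles, so $SW(S)=\mathrm{weight}(M)$. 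Conversely, from any allocation $S$ I would build a matching of weight at least $SW(S)$: for a unit-demand agent $i$ with $S_i\neq\emptyset$ take the edge $(y_i,j^*)$ to a best item $j^*\in S_i$; for a multi-minded agent $i$ take the pair-edge $(a_i,b_i)$ if $\{a_i,b_i\}\subseteq S_i$, otherwise take $(x_i,a_i)$ or $(x_i,b_i)$ if exactly one of $a_i,b_i$ lies in $S_i$. These chosen edges form a matching because the items used for distinct agents are disjoint (distinct allocated bundles) and $x_i,y_i$ are private to agent $i$, and each chosen edge has weight at least the corresponding agent's value for $S_i$. Together the two directions show the maximum matching weight equals the optimal social welfare, so \windet is in \p; combined with the LP step, \textsc{walrasian} is in \p.

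The step I expect to require the most care is this matching correspondence, in two spots: showing that for a multi-minded agent the pair-edge and its two singleton edges are mutually exclusive in any matching --- so that the four decoded possibilities are exactly $\emptyset$, $\{a_i\}$, $\{b_i\}$, $\{a_i,b_i\}$ --- and checking that collapsing into a single max-weight edge the parallel pair-edges produced when several agents share the same preferred pair neither loses an optimal allocation in the converse direction nor creates an infeasible one in the forward direction. The remaining ingredients, namely the $2$-demand observation and the linear program for \walpri, follow directly from the framework set up in Section~\ref{sec:preliminaries}.
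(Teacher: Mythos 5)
Your proposal follows exactly the paper's route: the same graph construction (item vertices, pair-edges keeping only the highest valuation, auxiliary vertices $x_i$ and $y_i$ for multi-minded and unit-demand agents), a maximum-weight matching in this general graph to solve \windet, and the linear feasibility system~\eqref{eq:LP} from Section~\ref{sec:preliminaries} for \walpri. You correctly fill in the details the paper leaves implicit (non-bipartiteness, hence general matching, and the two-directional matching--allocation correspondence), so the argument is sound and essentially identical to the paper's.
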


%%%%%%%%%%%%%%%%%%%%%%%%%%%%%%%%%%%%%%%%%%%%%%%%%%%%%%%%%%%%%

\section{3-demand Valuation Functions}\label{sec:3-dem}
In this section we prove strong \NP-hardness for \windet for 3-demand budget-additive 
valuation functions. 
As an example for 3-demand budget additive valuation, we can think of departments within a university that want to hire staff members for their labs. The agents are the departments, the items are the staff members, and the available resources of each department's lab defines the budget. The value a department gets from a candidate equals the quantity of the resources the candidate is capable of utilizing. The department is allowed to hire at most 3 staff members, due to regulations imposed by the university. 
\begin{theorem}
	\label{thm:three-ba}
	\windet is strongly \NP-hard even when all the agents have identical 3-demand budget-additive
	valuation functions.
\end{theorem}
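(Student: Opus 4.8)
The plan is to reduce from \textsc{3-Partition}, which is strongly \NP-complete: given $3m$ positive integers $a_1,\dots,a_{3m}$ and a bound $B$ with $\sum_{j} a_j = mB$, decide whether $\{1,\dots,3m\}$ can be partitioned into $m$ triples each summing to exactly $B$. From such an instance I would build a market with $m$ agents and $3m$ items, where item $j$ has value $a_j$ and every agent has the \emph{identical} $3$-demand budget-additive valuation with per-item values $a_1,\dots,a_{3m}$ and budget $B$. Because $\min\{B,\cdot\}$ is monotone in its argument, taking the best $3$-subset means taking the three largest items, so this common valuation is simply $v(X) = \min\bigl\{B,\ \text{sum of the three largest } a_j \text{ over } j\in X\bigr\}$. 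All numbers in the construction equal the (polynomially bounded) input numbers of the \textsc{3-Partition} instance, and the construction is clearly polynomial, so whatever hardness we prove is \emph{strong}.

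The core claim is that the market admits an allocation of social welfare exactly $mB$ if and only if the \textsc{3-Partition} instance is a yes-instance; since $mB = \sum_j a_j$ is a trivial upper bound on social welfare, ``welfare $\ge mB$'' is exactly the decision question for \windet here. The easy direction: given triples $T_1,\dots,T_m$ with $\sum_{j\in T_i} a_j = B$, assign the items of $T_i$ to agent $i$; each agent gets value $\min\{B,B\}=B$ and the welfare is $mB$.

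For the converse I would run a zero-slack counting argument. If an allocation $(S_0,S_1,\dots,S_m)$ achieves welfare $mB$, then since $v(S_i)\le B$ for all $i$ we must have $v(S_i)=B$, which forces the three largest items of $S_i$ to sum to at least $B$. Summing these ``top-three'' sums over $i$ yields a quantity $\ge mB$; on the other hand, because the $S_i$ are pairwise disjoint and all $a_j$ are positive, it is $\le \sum_i \sum_{j\in S_i} a_j \le \sum_j a_j = mB$. Hence every inequality is tight: no item sits in $S_0$, each $S_i$ coincides with its top three items and therefore has at most three elements, and $\sum_{j\in S_i} a_j = B$ for every $i$. Since all $3m$ items get allocated into sets of size at most three, each $S_i$ has size exactly three, and $S_1,\dots,S_m$ is a \textsc{3-Partition} solution.

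The main obstacle is keeping the converse watertight — in particular, arguing that hitting the maximum leaves no ``wasted'' allocated items (an agent holding a fourth item that contributes nothing, or an item parked in $S_0$) and that the per-agent sums are exactly, not merely at least, $B$. Both follow from the collapse of the telescoping chain of inequalities above, which in turn is forced by the exact equality $\sum_j a_j = mB$ between total item value and total available budget; strict positivity of the $a_j$ is what rules out any extra item inside an $S_i$. It is worth noting that one does not even need the customary $a_j\in(B/4,B/2)$ restriction of \textsc{3-Partition}: the $3$-demand cap together with the no-slack budget already pins every bundle to a triple.
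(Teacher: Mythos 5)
Your proposal is correct and follows essentially the same route as the paper: a reduction from \textsc{3-Partition} with $m$ agents, $3m$ items, identical $3$-demand budget-additive valuations with per-item values $a_j$ and budget $B$, and the threshold ``welfare $= mB$''. Your zero-slack counting argument for the converse is simply a more explicit version of the paper's observation that budget saturation plus the $3$-demand cap forces every agent to receive exactly three items summing to $B$.
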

\begin{proof}
	We prove the theorem with a reduction from 3-partition. An instance of 3-partition consists of 
	a multiset of $3n$ positive integers $a_1, a_2, \ldots, a_{3n}$ summing up to $S$. The question 
	is whether the multiset can 	be partitioned into $n$ triplets such that the elements of each triplet 
	sum up to $B = \frac{S}{n}$. So, given an instance of 3-partition we create a \windet instance with 
	$n$ agents and $3n$ items. 
	All the agents have the same 3-demand budget-additive valuation: they have value $a_i$ for 
	item $i$ and budget $B$. 
	
	The question we would like to decide is whether there exists an allocation 
	with social welfare $n\cdot B$.
	It is not hard to see that if there is a solution to 3-partition, then there 
	exists an allocation for \windet with social welfare $n\cdot B$. On the other hand, observe that, due to 
	the budget-additive valuations, social welfare $n \cdot B$ for the instance can be achieved only when 
	there exists an allocation where every agent gets value $B$. 
	In addition, since the agents have 3-demand valuation functions it means that any allocation that maximizes
	the social welfare, without loss of generality, allocates exactly three items to every agent; otherwise some agent gets more than 3 items and value gets wasted since, by definition of 3-demand valuation, the agent will only appreciate the 3 most valuable  items. 
	Hence, if there
	exists an allocation for the constructed instance with social welfare $n \cdot B$, necessarily, every 
	agent gets exactly 3 items whose values sum up to $B$. This allocation trivially defines a solution to 
	3-partition.
\end{proof}

\begin{corollary}
	\textsc{walrasian} is strongly \NP-hard even if all the agents have  identical 3-demand budget-additive valuation functions.
\end{corollary}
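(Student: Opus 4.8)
The plan is to derive the corollary as a polynomial-time Turing reduction from $3$-partition, reusing \emph{verbatim} the market constructed in the proof of Theorem~\ref{thm:three-ba}: $n$ agents, $3n$ items, and the single identical $3$-demand budget-additive valuation with per-item values $a_1,\dots,a_{3n}$ and common budget $B=S/n$. Since $3$-partition is strongly \NP-complete, the $a_i$ are polynomially bounded, so this market has size polynomial in the input and the reduction built on top of it introduces no new large numbers (only $p_j=a_j$ below); hence it witnesses \emph{strong} \NP-hardness rather than just \NP-hardness.

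The one ingredient beyond Theorem~\ref{thm:three-ba} is the claim that \emph{a satisfiable $3$-partition instance yields a market that does possess a WE}. Given a partition of the $a_i$ into $n$ triplets each of sum $B$, allocate the three items of each triplet to one agent (so $S_0=\emptyset$) and set $p_j=a_j$ for every item $j$. Then for every bundle $X$ the utility of an agent is $v_i(X)-p(X)=\max_{X'\subseteq X,\,|X'|\le 3}\min\{B,\sum_{j\in X'}a_j\}-\sum_{j\in X}a_j\le 0$, because $\min\{B,t\}\le t$ and the inner sum is maximized by $X'=X$; equality holds for the triplet allocated to agent $i$, whose items sum to exactly $B$, so she extracts value $B$ and pays $B$. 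Thus every agent maximizes her utility at her own bundle, which is condition~1 of Definition~\ref{def:WE}, and condition~2 is vacuous since all items are sold; so $(S,p)$ is a WE.

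Given this, on input a $3$-partition instance we construct the market and invoke the assumed algorithm for \textsc{walrasian}. If it reports that no WE exists, we output ``NO''; this is correct by the contrapositive of the claim. If it returns a pair $(S,p)$, then by the First Welfare Theorem $S$ is an optimal allocation, so we compute $SW(S)$ and output ``YES'' iff $SW(S)=n\cdot B$. Correctness here is exactly the analysis in the proof of Theorem~\ref{thm:three-ba}: the optimal social welfare of this market equals $n\cdot B$ precisely when the $3$-partition instance is satisfiable (each agent contributes at most $B$, and $n\cdot B$ is attained only via an allocation of three items of total value $B$ to every agent). The whole procedure runs in polynomial time, so \textsc{walrasian} is strongly \NP-hard.

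The step I expect to require the most care is the claim that \emph{some} WE exists for satisfiable instances: without it the ``no WE'' answer of the oracle would be uninterpretable and the reduction would collapse, since a ``no WE'' output does not by itself distinguish an unsatisfiable instance from a satisfiable one. The verification itself is short once one uses the natural pricing $p_j=a_j$, which makes every agent indifferent between the empty bundle and any at-most-$3$-item bundle of total value $B$, so the allocated triplets are utility-maximizing. A cleaner-looking \emph{many-one} reduction would instead require that unsatisfiable instances admit \emph{no} WE at all; this seems more delicate, as one would have to rule out a Walrasian pricing for every optimal allocation of welfare $<n\cdot B$, so the Turing formulation above is the route I would take.
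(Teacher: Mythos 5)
Your proposal is correct, but it is more explicit than what the paper does. The paper states this corollary without a separate proof: it treats it as immediate from Theorem~\ref{thm:three-ba} together with the general principle (First Welfare Theorem plus the \windet/\walpri decomposition of Section~\ref{sec:preliminaries}) that any Walrasian equilibrium allocation is welfare-optimal, so \textsc{walrasian} is at least as hard as \windet. You instead give a full Turing reduction from 3-partition and, crucially, add the one ingredient the blanket argument quietly needs in its decision form: a proof that satisfiable instances actually admit a WE, via the pricing $p_j=a_j$ under which every agent has utility $0$ everywhere (since $\min\{B,\sum_{j\in X'}a_j\}\leq\sum_{j\in X}a_j=p(X)$ for all $X'\subseteq X$ with $|X'|\leq 3$) and utility exactly $0$ on her assigned triplet. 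Without this lemma a ``no WE'' oracle answer is uninformative, which is precisely the subtlety you flag; your route buys a self-contained, rigorous argument even for the pure existence-decision version of \textsc{walrasian}, at the cost of being longer, while the paper's route is a one-line corollary within its framework but leaves that step implicit. Your handling of strongness (3-partition's polynomially bounded numbers, prices $p_j=a_j$ introducing no large numbers) is also correct; the only cosmetic slip is the phrase ``the inner sum is maximized by $X'=X$'', which is inaccurate when $|X|>3$ but harmless since the bound you actually use holds for every $X'\subseteq X$.
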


%%%%%%%%%%%%%%%%%%%%%%%%%%%%%%%%%%%%%%%%%%
%%%%%%%%%%%%%%%%%%%%%%%%%%%%%%%%%%%%%%%%%%
\section{Constant-demand Valuation Functions}\label{sec:const-dem}
In this section we study markets where the agents have $k$-demand valuation functions,
where $k$ is constant. Our results from the previous sections imply that deciding the existence of a WE  is 
\NP-hard even when $k=2$ and the valuation functions are \textsc{XOS}. In addition, we showed that the problem is \NP-hard for $k=3$ even for budget-additive valuations \cite{CSS05}.
This means that in order to get efficient algorithms we have to further restrict our market design in markets that retain constant demand $k$, but with either reduced number of agents, or reduced number of items.
For this reason, we study {\em unbalanced} markets. A market is {\em unbalanced} if the number of available items is significantly larger than the
number of agents, formally, $m \in \omega(n)$, or the other way around, $n \in \omega(m)$. For the case where, $m = O(\log n)$ and any $k$ the dynamic programming approach of Rothkopf et al. \cite{rothkopf1998computationally} solves \windet in $O(n^3)$. Next we show a result for the case where the market is unbalanced in the opposite direction.

\begin{theorem}
	\label{thm:m>n}
	In markets with $k$-demand valuations, $n$ agents and $m$ items, where $k$ and $n$ are constant, \windet is in \p.
\end{theorem}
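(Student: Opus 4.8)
The plan is a direct enumeration argument. It rests on two facts: the $k$-demand property forces optimal allocations into a very rigid shape, and, as noted after Definition~\ref{def:demand-util}, for constant $k$ the instance is presented succinctly by the $\Theta(n\cdot m^{k}\cdot\log V)$ declared values of the small bundles, so a ``brute force'' over bundles of size at most $k$ is legitimately polynomial rather than exponential.

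\textbf{Step 1 (normal form of an optimum).} First I would observe that there is an optimal allocation in which every agent receives at most $k$ items. Indeed, if agent $i$ is allocated $S_i$ with $|S_i|>k$, then by Definition~\ref{def:demand-util} there is $S_i'\subseteq S_i$ with $|S_i'|\le k$ and $v_i(S_i')=v_i(S_i)$; moving the items $S_i\setminus S_i'$ into the unallocated pile $S_0$ leaves the social welfare unchanged. Hence an optimal allocation is completely described by a tuple $(B_1,\dots,B_n)$ of pairwise disjoint bundles with $|B_i|\le k$ for all $i\in N$, and its welfare is $\sum_{i\in N}v_i(B_i)$, a sum of values that appear verbatim in the input.

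\textbf{Step 2 (the algorithm) and Step 3 (analysis).} For each agent there are $\sum_{j=0}^{k}\binom{m}{j}=O(m^{k})$ candidate bundles of size at most $k$, so there are $O(m^{nk})$ candidate tuples $(B_1,\dots,B_n)$. The algorithm enumerates all of them; for each it tests pairwise disjointness (an $O((nk)^2)$ check) and, if the bundles are disjoint, computes $\sum_{i\in N}v_i(B_i)$ by reading the stored values. It returns a tuple of maximum welfare, completed to an allocation by setting $S_0=M\setminus\bigcup_i B_i$. Correctness is immediate from Step~1: the enumerated family contains some optimal allocation and each candidate's welfare is evaluated exactly. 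The running time is $O\!\big(m^{nk}\cdot\mathrm{poly}(n,k,\log V)\big)$, which is polynomial in the input size since $n$ and $k$ are constants. Thus \windet is in \p, and combined with the LP for \walpri from Section~\ref{sec:preliminaries}, \walrasian is in \p for these markets as well.

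I do not expect a genuine obstacle here; the only point that needs care is exactly the one flagged in Section~\ref{sec:preliminaries}, namely that for constant $k$ the succinct encoding of the valuations makes the $m^{nk}$-sized search polynomial in the input, so no value or demand oracle is invoked. The whole content of the proof is the ``at most $k$ items per agent'' normalization together with this counting.
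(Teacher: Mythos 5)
Your proposal is correct and takes essentially the same route as the paper: both rest on the observation that an optimal allocation needs at most $k$ items per agent (hence at most $k\cdot n$ allocated items in total), and then brute-force a search space of size $m^{O(nk)}$, which is polynomial in the succinct input since $n$ and $k$ are constants. The only cosmetic difference is that the paper first enumerates the $(k\cdot n)$-subsets of $M$ and then solves a constant-size maximum-weight $(k{+}1)$-dimensional matching on each, whereas you enumerate the tuples $(B_1,\dots,B_n)$ of disjoint $\le k$-bundles directly; the two enumerations are equivalent.
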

\begin{proof}
	We consider the unbalanced market where the number of available items $m$ is a lot greater than the number of items $k \cdot n$ to be allocated. The number $k \cdot n$ comes from the fact that in an optimum allocation, not more than $k \cdot n$ items will be appreciated by the agents (by definition of the $k$-demand valuation function). Therefore, allocating more than these items does not improve the social welfare, thus, does not yield additional WE. In this case, we can find all possible subsets of size $k \cdot n$ of items, that is, all candidate sets of items to be allocated to the agents. Formally, we consider the set $I := \{ L \subseteq M ~ | ~ |L|=k\cdot n \}$ that consists of all $(k \cdot n)$-subsets of $M$. 
	It is $|I| = \binom{m}{k\cdot n} \in O((m-k\cdot n)^{k \cdot n})$, which is a polynomial in $m$ when $k$ and $n$ are constant.
	
	Observe now that, given a subset $L$ of items with size $k\cdot n$, one can construct a 
	$k+1$-uniform hypergraph, i.e. a hypergraph all of whose hyperedges have size $k+1$, in 
	the following way. Have its vertex set be $L \cup N$, and for every $k$-subset $L_k$ of $L$ have a hyperedge $L_k \cup \{i\}$ for every $i \in N$. Also, assign to each hyperedge a weight equal to the valuation of agent $i$ for the item bundle $L_k$, namely $v_{i}(L_k)$. On this graph one can run a brute-force algorithm to find a maximum weight ($k+1$)-dimensional matching in constant time, since the graph is of constant size. Then, by repeating the same routine for all $(k\cdot n)$-subsets of $I$ in time polynomial in $m$, we pick the one that yields the maximum sum of weights in the matching. The optimal allocation of items to agents corresponds to the aforementioned optimum matching. The running time of this algorithm is $O(m^c)$ for some constant $c$, i.e. polynomial in the input size, since the input size is $\Omega\left(n \cdot \binom{m}{k} \cdot \log V \right)$ bits, where $V := \max_{\substack{i \in N\\X \subseteq M}} v_{i}(X)$; that is because every agent has to declare how much her valuation is for every $k$-subset of items.
\end{proof}

\begin{corollary}
	In markets with $k$-demand valuations, $n$ agents and $m$ items, where $k$ and $n$ are constant, \textsc{walrasian} is in \p.
\end{corollary}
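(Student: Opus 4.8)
The plan is to obtain the corollary by gluing together Theorem~\ref{thm:m>n} with the decomposition of \textsc{walrasian} into \windet and \walpri established in Section~\ref{sec:preliminaries}. First I would invoke Theorem~\ref{thm:m>n} to compute, in time polynomial in $m$ (recall $k$ and $n$ are constant), an optimal allocation $S^* = (S_0^*, S_1^*, \ldots, S_n^*)$; this is exactly a solution to \windet. It then remains to handle \walpri for this allocation, and to argue that this single allocation suffices.

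Next, I would recall that for $k$-demand valuations \walpri on input $S^*$ is precisely the feasibility problem for the linear system~\eqref{eq:LP}, which has $m$ price variables and $n \cdot \sum_{j=1}^{k} \binom{m}{j} + m$ linear constraints. Since $k$ and $n$ are constant, this count is $O(m^k)$, hence polynomial in the input size, so (as already noted in Section~\ref{sec:preliminaries}) feasibility can be decided, and a feasible $p$ produced if one exists, in polynomial time by any polynomial-time LP solver with a constant objective. If the system is feasible with solution $p$, I would output the pair $(S^*, p)$, which is a Walrasian equilibrium by construction of~\eqref{eq:LP}. If it is infeasible, I would report that the market possesses no Walrasian equilibrium.

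The only point requiring care — more a subtlety than an obstacle — is justifying the last step: that infeasibility of~\eqref{eq:LP} for our particular optimal allocation $S^*$ certifies that \emph{no} Walrasian equilibrium exists. This is exactly the consequence of the First Welfare Theorem recorded after the problem definitions in Section~\ref{sec:preliminaries}: any Walrasian equilibrium allocation is optimal, and a supporting price vector of one Walrasian equilibrium supports every optimal allocation, so a Walrasian equilibrium exists if and only if the conditions of Definition~\ref{def:WE} can be met for an arbitrary optimal $S^*$. Thus deciding \walpri on the single allocation returned by Theorem~\ref{thm:m>n} is both necessary and sufficient, and the whole procedure runs in time polynomial in $m$, placing \textsc{walrasian} in \p.
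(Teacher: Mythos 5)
Your proposal is correct and follows exactly the route the paper intends: Theorem~\ref{thm:m>n} supplies an optimal allocation in polynomial time, and the Section~\ref{sec:preliminaries} decomposition reduces the remaining \walpri step to the polynomial-size linear system~\eqref{eq:LP}, whose infeasibility for one optimal allocation certifies non-existence of a Walrasian equilibrium. The extra care you take in justifying that last equivalence is precisely the fact the paper records right after the problem definitions, so nothing is missing.
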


\section{Discussion}\label{sec:discussion}
In this paper we study the complexity of computing Walrasian equilibria in markets with 
$k$-demand valuations.
As we show, even for $k=2$, the problem of deciding WE existence remains \NP-hard for a relatively restricted class of valuation functions, not very far outside gross substitutes, i.e. \textsc{XOS}. For $k=3$ the problem is already \NP-hard even for budget-additive valuations. Hence, we turn to the study of unbalanced markets and present a  polynomial-time algorithm for $k$-demand general valuations, where $k$ is constant.

For markets with $k=1$, known as ``matching markets'', we prove 
that the problem is in quasi-\NC. We view this as a 
very interesting result since all the known algorithms for the problem are highly sequential.
Can we design an \NC algorithm for the problem via a form of a simultaneous auction? This
would be remarkable since it would imply that bipartite weighted matching is in \NC. 
For $k=2$ we show that \windet is intractable even for \textsc{XOS} functions, and for $k=3$ the hardness remains for an even stricter class, namely budget-additive functions.
In order to completely resolve the complexity of 2-demand valuations, it remains to solve 
\windet for 2-demand budget-additive and 2-demand submodular valuations. Is the problem \NP-hard, or is there a 
polynomial time algorithm for it? Answering this question would provide a complete dichotomy 
for the complexity of the problems \windet and also \textsc{Walrasian}. 
For unbalanced markets with constant $k$, we covered the cases $m \in O(\log n)$ and $n \in \Theta(1)$. Are there efficient algorithms for any $m \in \omega(\log n)$ or $n\in \omega(1)$?
Another very intriguing direction is to study approximate
Walrasian equilibria. The recent results of Babaioff, Dobzinski, and Oren
\cite{BDO18} and of Ezra, Feldman, and Friedler \cite{EFF19}
propose some excellent notions of approximation. Can we get better results if we assume 
$k$-demand valuations?
	
%%%%%%%%%%%%%%%%%%%
%%%%%%%%%%%%%%%%%%%
%%%%%%%%%%%%%%%%%%%

%%%%%%%%%%%%%%%%%%%%%%%%%%%%%%%%%%%%%%%%%%%%%%%%%%%%%%%%%%%%%%%%%%%%%%%%

%%% The acknowledgments section is defined using the "acks" environment
%%% (rather than an unnumbered section). The use of this environment 
%%% ensures the proper identification of the section in the article 
%%% metadata as well as the consistent spelling of the heading.

\section*{Acknowledgements}
The work of the second author was supported by the Alexander von Humboldt Foundation with funds from the German Federal Ministry of Education and Research (BMBF). The work of the third author was partially supported by the NeST initiative of the School of EEE and CS at the University of Liverpool, and by the EPSRC grant EP/P02002X/1.

%%%%%%%%%%%%%%%%%%%%%%%%%%%%%%%%%%%%%%%%%%%%%%%%%%%%%%%%%%%%%%%%%%%%%%%%

%%% The next two lines define, first, the bibliography style to be 
%%% applied, and, second, the bibliography file to be used.

%\newpage
\bibliographystyle{plain}
\bibliography{references}

%%%%%%%%%%%%%%%%%%%%%%%%%%%%%%%%%%%%%%%%%%%%%%%%%%%%%%%%%%%%%%%%%%%%%%%%

\end{document}